\newtheorem{definition}{Definition}
\newtheorem{theorem}{Theorem}
\renewcommand{\cite}[1]{[#1]}
\def\beginrefs{\begin{list}%
        {[\arabic{equation}]}{\usecounter{equation}
         \setlength{\leftmargin}{2.0truecm}\setlength{\labelsep}{0.4truecm}%
         \setlength{\labelwidth}{1.6truecm}}}
\def\endrefs{\end{list}}
\def\bibentry#1{\item[\hbox{[#1]}]}
\title{Separating Circuits: Switching Lemmas and Random Restrictions}
\author{Bruce Changlong Xu \\
        \small May 31st, 2021
}
\date{} 
\begin{document}
\maketitle
\begin{abstract} 
\noindent The field of Circuit Complexity utilises careful analysis of Boolean Circuit Functions in order to extract meaningful information about a range of complexity classes. In particular, the complexity class $P / \text{Poly}$ has played a central role in much of the historical attempts to tackle the problem of whether solution and verification are equivalent i.e. the central $P$ versus $NP$ problem. Whilst circuits can potentially be easier to analyse than Turing Machines due to their non-uniform nature of computation (program size is allowed to depend on the input size), it is notoriously hard to establish lower bounds for them. In this report, we will touch upon several results published by Hastad, Sipser and Razborov that will highlight a dynamic interplay between circuit complexity and many of the central ideas of modern-day complexity theory, and in particular the central importance of Hastad's Switching Lemma.
\end{abstract}

\noindent \textbf{Keywords:} \textit{Natural Proofs, Circuit Complexity, Lower Bounds, P-Poly, Uniform and Non-uniform Complexity, Hastad's Switching Lemma, Restriction and Polynomial Methods, Literature Review}

\tableofcontents

\section{Introduction and Context}
\subsection{Historical Context}

Circuit complexity is a branch of Computational Complexity theory in which Boolean functions are classified according to the size or depth of the Boolean Circuits that compute them. Where the \textit{circuit-size complexity} of a Boolean function $f$ is the minimal size of any circuit computing $f$. The \textit{circuit-depth complexity} of a Boolean function $f$ is the minimal depth of any circuit computing $f$. Boolean functions and the corresponding complexity behind these functions have been under thorough study since the publication of \textit{Claude Shannon's} seminal paper in the year of $1949$, which demonstrated that almost all Boolean Functions on $n$ variables require circuits of size $O(2^n / n)$.

Our motivations for studying circuit complexity are multi-faceted. Circuit complexity is a reasonable measure for the difficulty measure for functions that take only finitely many inputs i.e. through the lens of circuit complexity, we are able to deduce very meaningful conclusions about finite-input problems - we have more control over the study of \textit{program size} of algorithms on finite bit instances than if we were simply analyzing the same problem through the lens of Turing Computation. There are also close connections between circuit complexity, de-randomisation and by extension the complexity class BPP. 

The best known lower bound on the circuit size for a problem in NP is currently $4.5n - o(n)$, \cite{AB07} this is not very strong and much lower than expected, yet we are still stuck at this stage. For the Polynomial Hierarchy, better lower bounds are known - for example for every $k > 0$, some language in the Polynomial Hierarchy require circuits of size $\Omega(n^k)$.

Whilst proving circuit lower bounds is notoriously difficult, super-polynomial lower bounds have been proved under certain restrictions on the family of circuits used. The first function for which such bounds were obtained was the \textit{parity function}, which computes the sum of input bits modulo $2$. The proof that PARITY was not contained in the complexity class $AC^0$ was first established independently by \textit{Ajitai} in $1982$ and by \textit{Furst, Saxe} and \textit{Sipser} in $1984$, after which in $1987$ Hastad proved the even stronger result that any family of constant-depth circuits computing the parity function require exponential size, where he applied his famous and powerful switching lemma, utilising the key idea of random restrictions, which Rossman, Servedio and Tan extended to the idea of random \textit{Projections} to prove an average-case depth hierarchy theorem for Boolean Circuits \cite{RST15}. 

\subsection{Setup and Definitions}

\begin{definition}
$t$-CNF is an AND of clauses of width at most $t$. A clause of width $t$ is an OR of $t$ literals. For example, $x_1 \lor \bar{x_2} \lor x_4$ is a clause of width $3$, and $(x_1 \lor x_2) \land (\bar{x}_1 \lor \bar{x}_2)$ is a $2$-CNF and another example of a $3$-CNF:
\end{definition}
$$(A \lor \neg B \lor \neg C) \land (\neg D \lor E \lor F)$$

\begin{definition}
$s$-DNF is an OR of disjuncts of width at most $s$, where a disjunct of width $s$ is an AND of $s$ literals, with an example below:
\end{definition}
$$(A \land \neg B \land \neg C) \lor (\neg D \land E \land F)$$

\begin{definition}
A restriction $\rho$ is a mapping from $\{x_1, \cdots, x_n\}$ to $\{0, 1 *\}$. A random restriction $\rho \in R(p, q)$, $0 \le p, q \le 1$ is a random restriction such that:
$$P_{\rho}[\rho(x_i) = *] = p$$
$$P_{\rho}[\rho(x_i) = 0] = (1 - p)q$$
$$P_{\rho}[\rho(x_i) = 1] = (1 - p)(1 - q)$$
independently, for each $i$.
\end{definition}

\begin{definition}
$AC^k$ is the class of functions computable by polynomial size and $O((\log n)^k)$ depth circuits over infinitely many AND and infinitely many OR and NOT gates
\end{definition}

\begin{definition}
$NC^k$ is the class of functions computable by polynomial size and $O((\log n)^k)$ depth circuits over $2$-AND and $2$-OR and NOT gates
\end{definition}

In particular we know that $\text{AC}^k \subseteq \text{NC}^{k + 1} \subseteq \text{AC}^{k + 1}$

\begin{definition}
The complexity class $P / \text{Poly}$ is defined as the set of languages recognised by a polynomial-time Turing Machine with polynomial-bounded \textit{advice}\footnote{We say that a Turing Machine has access to an advice family $\{a_n\}_{n \ge 0}$ where each $a_n$ is a string if while computing on an input of size $n$, the machine is allowed to examine $a_n$. The advice is said to have polynomial size if there is a $c \ge 0$ such that $|a_n| \le n^c$} function and its relation with circuit complexity is encoded in the fact that:
$$P_{/\text{poly}} = \cup_{c \in \mathbb{N}} \text{SIZE}(n^c)$$
\end{definition}

\begin{definition}

\end{definition}

\subsection{Shannon's Proof}

We present Shannon's beautiful non-constructive proof below. 

\textbf{Shannon (1949)} There is a constant $c$ such that every $n$-ary Boolean Function has Circuit Complexity at most $2^n / cn$

\begin{proof}
Fix an $n$-ary function $F : \{0, 1\}^n \to \{0, 1\}$. Let $x = (x_1, x_2, \cdots, x_n)$ denote the input vector for some integer $k$ to be specified later and define $y = (x_1, \cdot,s x_k)$ and $z= (x_{k + 1}, \cdots, x_n)$.

We begin by constructing a $2^k \times 2^{n - k}$ truth table for $F$, where each row is specified by a possible value of $y$ and each column by a possible value of $z$. Suppose that there are only $t$ different column vectors in this matrix. The inequality $t \le 2^{n- k}$ is obvious; less obvious but still trivial is the inequality $t \le 2^t$. Let $F_i(z) = 1$ if column $z$ has the the $i$-th pattern and $0$ otherwise. Similarly, let $G_i(y)$ be the function specified by the $i$-th column vector. Then we can write:
$$F(x) = F(y, z) = \cup_{i = 1}^n G_i(y) \land F_i(z)$$
Suppose for the moment that all the $1$ in this table are restricted to $s$ of the $2^k$ rows. This immediately implies that $t \le 2^s$. Under this assumption, we can build a circuit for $F$ as follows. The inputs are connected to two binary-to-positional converters $B_k$ and $B_{n - k}$ which requires $2^k + 2^{n - k}$ gates. For each $i$ between $1$ and $t$, we add two trees of ORs connected by a single AND to compute $F_i(y) \land G_i(z)$ for each $i$ this requires $2^k + 2^{n - k} + 1$ gates. Finally, we connect all these with another tree of $t \le 2^s$ OR gates. The total number of gates used in this special case is at most $2(2^k + 2^{n - k}) + 2^s + 1$.

Now back to the general case. We can write any $n$-ary function $F$ as a dis-junction $2^k / s$ different $n$-ary functions, each of which has a truth table with at most $s$ non-zero rows. To construct a circuit for $F$, we build a circuit for each of its $s$-row components, sharing a single pair of binary-to-positional converters. The total gate count for this construction is at most:
$$2^k + 2^{n - k} + \frac{2^k}{s}(2^k + 2^{n - k} + 2^s + 1) = 2^k + 2^{n - k} + \frac{2^n + 2^k(2^s + s^k + 1)}{s}$$
Finally, we are free to choose the parameters $k$ and $s$ to minimize this gate count. If we take $k = 2 \log n$ and $s = n - 2 \log n$, then the total gate count is at most:
$$n^2 + \frac{2^n}{n^2} + \frac{2^n + n^2(2^n / n^2 + n^2 + 1)}{n - 2 \log n} = n^2 + \frac{2^n}{n^2} + \frac{2^{n + 1} + n^4 + n^2}{n - 2 \log n} = O(\frac{2^n}{n})$$
\end{proof}

\subsection{Karl-Lipton Theorem}

The Karl-Lipton Theorem justifies why circuit lower bounds is a viable way to attack the $P$ versus $NP$ problem.

\begin{center} 
\includegraphics[scale = 0.5]{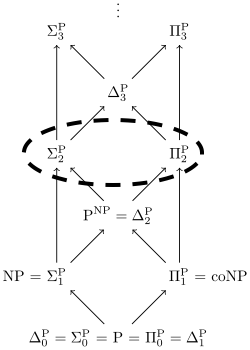}
\end{center} 

\textit{Karl-Lipton Theorem:} $\text{PH} \neq \sum_2 \Rightarrow \text{NP} \subseteq P_{\text{/poly}}$
\begin{proof}
Suppose $\text{NP} \subseteq P_{/\text{poly}}$. We will show that this implies $\text{PH} \subseteq \sum_2$. Note that $\text{PH} \subseteq \sum_2$ and $\prod_2 \subseteq \sum_2$ are equivalent; if the latter were true then by swapping quantifier order, we have that $\sum_3 = \sum_2$, collapsing the Polynomial Hierarchy. 

By assumption that $\text{NP} \subseteq P_{/\text{poly}}$ there exists a polynomially sized circuit family $\{C_n\}$, which decides SAT. This implies that there exists a polynomial size circuit family $\{C_{*n}\}$ which, given a boolean formula $\phi$, finds a satisfying assignment for it.

We construct $\{C_{*n}\}$ in the following way: we choose a free variable and substitute $1$ or $0$ for it, then run the decider. If the decider says that one of these boolean formulas has a satisfying assignment, repeat this process on another free variable. When there are no more free variables, we will have substituted a satisfying assignment into $\phi$. 

Let $L \in \prod_2$. Then $x \in L \iff \forall u \exists v \text{ s.t. } \phi(x, u, v) = 0$

Then $\exists C * \forall u \phi(x, u, C * (x, u)) = 0$. That is, there exists a circuit which outputs an unsatisfying assignment for $\phi$ for any $u$. Then $L \in \sum_2$ and the theorem is proven. \\ 
\end{proof}

\section{Hastad's Switching Lemma \cite{AB07}}

Arguments using restrictions (partial assignments to input variables) to simplify unbounded fan-in Boolean circuits have been quite successful for obtaining lower bounds on circuit size and depth, oracles to separate complexity classes, lower bounds on time, processors and memory of PRAMs as well as on the complexity of proofs in bounded-depth proof systems. 

Ultimately the key intuition behind Hastad's switching lemma is to show that an AND of small ORs can be written as an OR of small ANDs if an appropriate restriction is applied. In particular, we can reduce the depth of formulas by $1$ at the expense of reducing the number of input variables. 

The core to their effectiveness is that they simplify the formulas without completely trivialising the functions that are being computed. Therefore, there is a certain element of creativity attributed to this choice of restrictions, since one chooses a family of restrictions that is tailored to the function being computed and argues that some member of the family has the desired properties. As long as the random restriction is likely to kill a clause, the switching lemma should work. 

The history of switching lemmas traces back to Furst, Sax and Sipser, but in this project we will be looking at the most powerful of these switching lemmas, which is attributed to Hastad. 

\begin{center} 
\includegraphics[scale = 0.5]{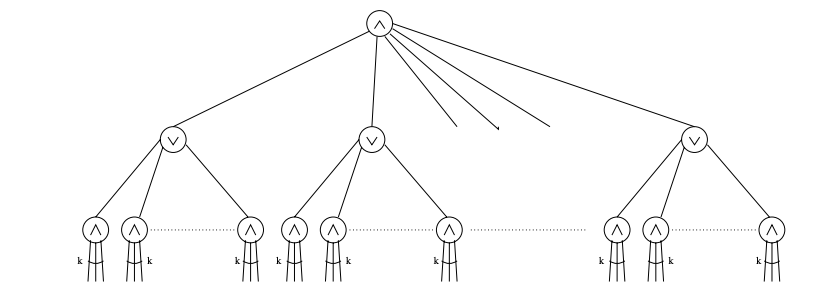}
$$\downarrow$$
\includegraphics[scale = 0.5]{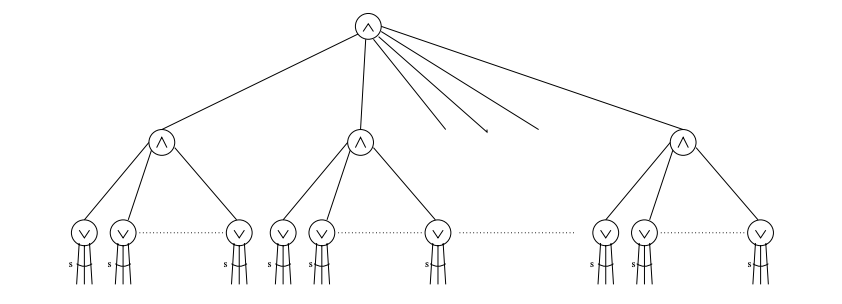}
\end{center} 

\subsection{Statement of the Lemma}

\begin{theorem}
Let $f$ be some Boolean Function which can be written as some $t$-CNF. Then, for any integer $s \ge 1$, any $p \in [0, 1]$ we have that:
$$P_{\rho \in R(p, 1/2)}[f|_{\rho} \text{is not } s-\text{DNF}] \le (5pt)^s$$
Here $f_{\rho}$ is not $s$-DNF means $f|_{\rho}$ can not be written as $s$-DNF
\end{theorem}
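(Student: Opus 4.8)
The plan is to follow Razborov's encoding-based proof, routed through the notion of a \emph{canonical decision tree}. First I would translate the event ``$f|_\rho$ is not $s$-DNF'' into a statement about decision-tree depth. Given the $t$-CNF $f=\bigwedge_i C_i$ and a restriction $\rho$, build a canonical decision tree $T(f|_\rho)$ by scanning the clauses $C_1,C_2,\dots$ in their fixed order and, at each node, querying all the still-live variables of the first clause not yet satisfied by the partial assignment read so far, branching on all their values and recursing. The point of this construction is that if every root-to-leaf path of $T(f|_\rho)$ has length at most $s$, then collecting the conjunctions of literals along the branches that reach $1$-leaves exhibits $f|_\rho$ as an $s$-DNF (and dually as an $s$-CNF). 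Hence the bad event is contained in the event that $T(f|_\rho)$ has a path of length at least $s$, and it suffices to bound the probability of the latter by $(5pt)^s$.

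Second, I would set up the counting. Let $B$ be the set of restrictions $\rho$ for which $T(f|_\rho)$ has a path $\pi$ of length at least $s$; fix the first $s$ queried variables along such a path. This path visits a sequence of clauses, querying $s_1,s_2,\dots$ live variables in each with $\sum_j s_j=s$. Define a new restriction $\rho'=\rho\pi$ by setting each of these $s$ starred variables to the value it takes along $\pi$, so $\rho'$ has exactly $s$ fewer stars than $\rho$. Because $q=\tfrac12$, converting a star into a fixed $0/1$ value changes the probability weight by the factor $\frac{(1-p)/2}{p}$ per variable, so that $\Pr[\rho]=\left(\frac{2p}{1-p}\right)^{s}\Pr[\rho']$.

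Third --- and this is the crux --- I would show that $\rho\mapsto(\rho',\mathrm{adv})$ is injective, where $\mathrm{adv}$ is a short advice string recording, clause by clause, which subset of each visited clause's $\le t$ literals was queried and what true values those variables had under $\rho$. The decoding works by \emph{replaying} the canonical construction on $\rho'$: since $\rho'$ already fixes the path variables, one re-identifies the first unsatisfied clause, uses the advice to locate the live positions and restore their original $\rho$-values, and iterates. A crude count charges each queried variable at most $t$ choices of position within its clause and $2$ choices of recorded value, giving at most $(2t)^s$ advice strings (clause boundaries being marked and absorbed into this count). Combining the injection with the probability identity yields
$$\Pr[\rho\in B]=\sum_{\rho\in B}\Pr[\rho]\;\le\;\sum_{\rho',\,\mathrm{adv}}\left(\tfrac{2p}{1-p}\right)^{s}\Pr[\rho']\;\le\;\left(\tfrac{2p}{1-p}\right)^{s}(2t)^{s}\;=\;\left(\tfrac{4pt}{1-p}\right)^{s}\;\le\;(5pt)^{s},$$
using $\tfrac{4}{1-p}\le 5$ when $p\le\tfrac15$, and observing that when $p>\tfrac15$ the right-hand side already exceeds $1$ and the bound is trivial.

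The main obstacle I anticipate is making this encoding genuinely injective with an advice count tight enough to force the constant $5$. The delicate point is proving unique recoverability of $\rho$ from $(\rho',\mathrm{adv})$: this rests on the facts that the canonical tree processes clauses in a fixed deterministic order and that each visited clause is left \emph{unsatisfied} by $\pi$ until its final queried literal, so the replay can always detect where one clause-segment ends and the next begins. Handling partial clauses (where only some literals are queried) and shaving the boundary bookkeeping down so that the whole advice genuinely fits in $(2t)^s$ is where the real work --- and the precise constant --- lives.
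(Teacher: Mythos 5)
Your overall strategy is the same as the paper's (Razborov's) proof: reduce ``not $s$-DNF'' to ``canonical decision tree of depth $> s$,'' then injectively encode each bad restriction as a restriction with $s$ fewer stars plus a short advice string, and compare probability weights. The reduction to decision-tree depth, the weight identity $\Pr[\rho]=\left(\tfrac{2p}{1-p}\right)^{s}\Pr[\rho']$, and the final arithmetic (including the trivial case $p>\tfrac15$) are all fine. But there is a genuine gap exactly at the point you flagged as ``the crux,'' and your sketch of why decoding works does not resolve it. You encode $\rho$ as $\rho'=\rho\pi$, i.e.\ you fix the queried variables to their values \emph{along the path}. For a CNF, every clause segment of a long canonical path (except possibly the last) \emph{satisfies} its clause --- otherwise the tree would have reached a $0$-leaf and stopped. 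So under $\rho\pi$ every visited clause is satisfied, and it is indistinguishable from the clauses that were already satisfied by $\rho$ alone. Your decoder's first step, ``re-identify the first unsatisfied clause,'' therefore points at some clause beyond the path, not at the first visited clause $C_{1}$, and the replay never gets started. The advice you budget (position-within-clause and a value bit, $(2t)^{s}$ strings total) cannot repair this: to tell the decoder \emph{which} clauses were visited you would have to record clause indices, and that costs a factor depending on the number of clauses in $f$, which destroys the $(5pt)^{s}$ bound. Your remark that ``each visited clause is left unsatisfied by $\pi$ until its final queried literal'' concerns boundaries \emph{within} a segment; the failure is in locating the segments' clauses at all.

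This is precisely what the $\sigma$-trick in the paper's proof is for, and it is the one idea your proposal is missing. Razborov encodes $\rho$ not with the path values $\pi_{i}$ but with $\rho\cup\sigma_{1}\cup\cdots\cup\sigma_{m}$, where $\sigma_{i}$ is the \emph{unique} assignment to the queried variables of the $i$-th visited term/clause that makes it true (dually, for a clause: the unique assignment making all its literals false). Uniqueness is what keeps the advice short, and the choice of $\sigma_{i}$ over $\pi_{i}$ is what makes decoding possible: under $\rho\cup\sigma_{1}\cup\cdots\cup\sigma_{m}$ the first visited term is identifiable as the \emph{first term forced to the distinguished value} (in the paper's DNF formulation, the first term set to true), since everything earlier in the fixed order was killed by $\rho$ alone. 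The decoder then uses the indicator block $c_{1}$ to find $\sigma_{1}$'s domain and the extra string $z\in\{0,1\}^{s}$ to flip $\sigma_{1}$ back to $\pi_{1}$, and iterates with $\rho\cup\pi_{1}\cup\sigma_{2}\cup\cdots\cup\sigma_{m}$. So your proposal needs to be amended: replace $\rho'=\rho\pi$ by the $\sigma$-completion, and move the path values into the auxiliary string $z$; with that change the rest of your counting and the constant-chasing go through essentially as you wrote them.
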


The argument of Hastad's switching lemma involves the probabilistic method, in particular we argue that the probability that a restriction from the family fails to have the desired properties is strictly less than $1$. For example, we can consider an OR of small ANDs and each term in turn. Essentially a term that is falsified by a restriction does not contribute any variables to the AND of small ORs and for each term that is not falsified it is more likely that the term is satisfied than that any variable is contributed in the AND of small ORs.

Note that one very important feature of this lemma is that the probability on the right hand side of the inequality $(5pt)^s$ does \textit{not} depend on the number of variables. The notion of the random restriction is actually key to Hastad's Switching Lemma. Indeed if we consider the threshold function $\text{Th}_k^n(x)$, which is $1$ if and only if:
$$x_1 + x_2 + \cdots + x_n \ge k$$
We can readily see that $\text{Th}_k^n(x)$ can be written as $k$-DNF and $(n - k)$-CNF but not as a $(n - k - 1)$-CNF. Indeed if $k$ is fixed and $n$ goes to infinity, this example shows, in general, that $t$-CNF can not be written as an $s$-DNF, where $s = s(t)$ only depends on $t$. 

\subsection{Depth $2$ Circuit Computing Parity}

\subsection{Lower-bound on PARITY}

\textbf{Theorem:} For any $d \ge 2$, $\text{PAR}_n$ can not be computed by depth-$d$ size $2^{cn^{1 / (d - 1)}}$ circuit when $n$ is sufficiently large, where $c$ is a reasonable constant, say $1 / 11$

\begin{proof}
Assume for sake of contradiction that $\text{PAR}_n$ can be computed by some depth $d$ size $S$ circuit $C(x)$, where $$S < 2^{cn^{1/(d - 1)}}$$ and the constant $c$ will be determined later. Without loss of generality, assume that our circuit $C(x)$ is already in standard form (depth $d$ size $S$ circuit can be transformed to standard form of depth $d$ size $dS$, for which we only need to make the constant $c$ slightly larger, that is, $c + \epsilon$ for any $\epsilon > 0$.

Now view $C(x)$ as a depth $d + 1$ circuit by adding a dummy layer consist of AND or OR gates of fan-in $1$ to the bottom, alternating each layer. The motivation is to apply switching lemma for $t = 1$. Indeed applying the switching lemma to this circuit with parameter $t = 1$, $s = (1 + \delta)c_H \log S$ and $p = 1 / (2c_H)$ where $\delta > 0$ is an arbitrarily small constant, and $c_H$ denotes the constant $5$ in Hastad's Switching Lemma. That is, we will apply random restriction $\rho \in R(1/(2c_H), 1/2)$. For each gate in the \textit{second bottom} layer, which is $1$-CNF or $1$-DNF with probability at least:
$$1 - (c_H pt)^s = 1 - (\frac{1}{2})^{(1 + \delta) c_H \log S} \ge 1 - S^{1 - \delta}$$
that gate can be written as $s$-DNF. If all the bottom gates can be \textit{switched}, we get a depth $d$ circuit of bottom fan-in $s = (1 + \delta) c_H \log S$. 

Now apply the random restriction $\rho_i \in R(p, 1/2)$ where:
$$p = 1/(2c_H s) = 1 / (2(1 + \delta)c_H \log S)$$
for $i = 1, 2, \cdots, d - 2$. For $\rho_1$, for each gate in the second bottom layer, by switching lemma, with probability at least:
$$1 - (c_Hpt)^2 \ge 1 - S^{1 - \delta}$$
this $t$-CNF can be converted to an $s$-DNF where $t = s = (1 + \delta) \log S$. For each $\rho_i$, the depth of the circuit is supposed to be reduced by $1$. Finally, it remains to count the number of times we have applied the switching lemma, which is at most $S$, the total number of gates in the original circuit. Apply a union bound, with probability $\ge 1 - S^{-\delta} \to 1$, circuit $C(x)|_{\rho}$ can be converted to a depth $2$ size $S$ circuit where:
$$\rho = \rho_{d - 2} \circ \cdots \circ \rho_0 \in R(1 / (2c_H(2(1 + \delta) \log S)^{d - 2}), 1/2)$$
Now that we have proved that $C(x)|_{\rho}$ can be written as a depth $2$ size $S$ circuit with high probability, we observe that parity is still a parity function or its negation after applying \textit{any} restriction in possibly less number of variables. let us count the number of free variables after applying $p$. The expected number of free variables is:
$$\frac{n}{2c_H(2c_H(1 + \delta)\log S)^{d - 2}}$$
By the Chernoff bound, we claim:
$$m = |\rho^{-1}(*)| > \frac{n}{2c_H(2c_H(1 + 2\delta)\log S)^{d - 2}}$$
with high probability. It is not difficult to prove that any depth $2$ circuit computing $\text{PAR}_m$ should have bottom fan-in $m$, which implies that:
$$(1 + \delta)\log S \ge m \ge n / (2c_H(2c_H(1 + 3\delta) \log S)^{d - 2})$$
This implies that $S \ge 2^{n^{(d - 1)} / ((2 + 3\delta)c_H)}$. Therefore, as long as constant $c < 1 / (2c_H)$, our theorem holds.

\end{proof}

\subsection{Matching Upper Bound on Parity}

Now we present an equivalent upper bound up to a constant factor for PARITY. Indeed for depth $3$, we observe that:
$$\text{PAR}_n(x_1, \cdots, x_n) = \text{PAR}_m(\text{PAR}_m(y_1), \cdots, \text{PAR}_m(y_m))$$
where $m = \sqrt{n}$ and $y_1 = (x_1, \cdots, x_m), y_2 = (x_{m + 1}, \cdots, x_{2m})$ etc. For the $\text{PAR}_m$ on the outside, write it as $m$-CNF of size $2^{m-1} + 1$ and for the $\text{PAR}_m$ inside, write it as $m$-DNF of size $2^{m-1} + 1$ which is a depth $4$ circuit of size
$$(1 + m)(2^{m - 1} + 1) \le n2^{\sqrt{n}}$$
To make it depth $3$, simply merge two layers in the middle, both of which are OR gates and the size will not increase. Using the same argument we can prove:
\begin{theorem}
For any $d \ge 2$, $\text{PAR}_n$ can be computed by depth $d$ circuit of size at most $$n2^{n^{1/(d - 1)}}$$
\end{theorem}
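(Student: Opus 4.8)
The plan is to generalise the depth-$3$ construction just given by recursion, expressing $\text{PAR}_n$ as a balanced tree of parity blocks. Setting $m = \lceil n^{1/(d-1)} \rceil$ so that $m^{d-1} \ge n$, I would decompose $\text{PAR}_n$ into $d-1$ nested levels of $\text{PAR}_m$'s: the root computes the parity of $m$ sub-blocks, each of which computes the parity of $m$ further sub-blocks, and so on through $d-1$ levels, with the $x_i$ feeding the bottom level (padding with constants if $m^{d-1} > n$). Since parity composes with itself, this tree indeed computes $\text{PAR}_n$. Each individual $\text{PAR}_m$ block is then written as a depth-$2$ circuit, either as an $m$-CNF or as an $m$-DNF, each of size $2^{m-1}+1$, exactly as in the $d=3$ case.

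The depth reduction comes from collapsing adjacent layers of the same gate type. First I would fix, level by level, whether each block is realised in CNF form (top AND, bottom OR) or DNF form (top OR, bottom AND), alternating the choice so that the bottom gate type of every block agrees with the top gate type of the blocks feeding into it. Stacking the $d-1$ depth-$2$ blocks naively yields depth $2(d-1)$; but at each of the $d-2$ internal interfaces the two adjacent layers are of the same type (OR-of-OR or AND-of-AND) and can be merged into a single gate without increasing size. This removes $d-2$ layers, leaving a circuit of depth $2(d-1) - (d-2) = d$, as required.

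For the size, I would count blocks by level: there are $m^{j-1}$ blocks at level $j$, for $j = 1, \dots, d-1$, so the total number of blocks is $\sum_{j=0}^{d-2} m^j = (m^{d-1}-1)/(m-1) \le n/(m-1)$. Each contributes at most $2^{m-1}+1$ gates, and merging only decreases the gate count, so the final depth-$d$ circuit has size at most $\frac{n}{m-1}(2^{m-1}+1)$. A routine estimate shows this is at most $n \, 2^{m} = n \, 2^{n^{1/(d-1)}}$ once $m$ is moderately large, which is the claimed bound.

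The main obstacle is the handling of negations at the merge interfaces. When the top $\text{PAR}_m$ of a block is written in, say, CNF form, its clauses contain \emph{negated} sub-block outputs as well as positive ones, and the negation of a sub-circuit whose top gate is OR is, by De Morgan, an AND, which would block the merge. I would resolve this by strengthening the construction to an induction whose hypothesis supplies, for each sub-parity, a depth-$(d-1)$ circuit computing \emph{either} $\text{PAR}_{m^{d-2}}$ \emph{or} its negation with \emph{either} choice of top gate, all within the same size bound (the negation of a parity is again an affine function and has depth-$2$ representations of both types). With the correct polarity and top-gate type available at every interface, every merge is between like gates and the bookkeeping above goes through. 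The only remaining loose ends are the integrality of $m$ and the small-$m$ edge cases, which arise only when $d$ grows with $n$ and affect nothing beyond constant factors already absorbed by the padding.
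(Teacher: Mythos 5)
Your proposal follows essentially the same route as the paper: the paper proves the depth-$3$ case by writing $\text{PAR}_n$ as $\text{PAR}_m$ of $\text{PAR}_m$'s with $m=\sqrt{n}$, expressing the outer block as an $m$-CNF and the inner blocks as $m$-DNFs, and merging the two adjacent OR layers, then asserts the general $d$ follows ``using the same argument'' --- which is precisely the recursive block decomposition, CNF/DNF alternation, and layer-merging you spell out. If anything you are more careful than the paper: your strengthened induction supplying both polarities of each sub-parity (to handle negated sub-block outputs at the merge interfaces) addresses a genuine subtlety that the paper's size count $(1+m)(2^{m-1}+1)$ silently glosses over, and the remaining integrality/padding issues you flag are harmless bookkeeping.
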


Now if the depth $d \ge \log n$< it turns out that there exists linear size circuits computing $\text{PAR}_n$. 

\begin{theorem}
$\text{PAR}_n$ can be computed by depth $\lceil \log n \rceil$ circuit of size $2n - 1$
\end{theorem}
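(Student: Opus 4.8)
The plan is to exploit the associativity and commutativity of parity to build the circuit by divide-and-conquer as a balanced binary tree of two-input parity (XOR) gates. Since $x_1 \oplus x_2 \oplus \cdots \oplus x_n$ may be grouped in any order, I can split the $n$ inputs into two halves of roughly $n/2$ variables each, recursively compute the parity of each half, and combine the two sub-results with a single two-input XOR gate at the root. Associativity is the crucial enabling property here: it is precisely what guarantees that the balanced grouping computes the same function as the left-to-right chain, so that the shallow tree is legitimate.

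First I would set up the two governing recurrences. Writing $S(n)$ for the size and $D(n)$ for the depth of the circuit produced by this construction, the recursive split yields
$$S(n) = 2\,S(\lceil n/2 \rceil) + 1, \qquad D(n) = D(\lceil n/2 \rceil) + 1,$$
with base cases $S(1) = 1$ and $D(1) = 0$: a single variable is delivered by its own input node at depth $0$, and each combining gate contributes exactly one gate and one unit of depth.

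Next I would solve these recurrences. For $n$ a power of two the size recurrence unrolls to $S(n) = n\cdot S(1) + (n-1) = n + (n-1) = 2n - 1$, which is exactly the number of nodes in a binary tree with $n$ leaves, namely the $n$ inputs together with the $n-1$ internal XOR gates. The depth recurrence gives $D(n) = \log n$. For general $n$ the ceiling in the split raises the depth to $\lceil \log n \rceil$, while the bound $S(n) \le 2n - 1$ is preserved because an unbalanced final level can only decrease the number of internal gates below $n-1$.

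The main obstacle I anticipate is bookkeeping rather than conceptual: handling $n$ that is not a power of two so that the depth comes out to exactly $\lceil \log n \rceil$ while the size stays at $2n - 1$, and being explicit about the gate model. In particular, the clean bound $2n-1$ (rather than a construction-dependent constant times $n$) emerges only under the convention that a two-input parity gate is charged unit cost and that the $n$ input nodes are themselves counted toward the size, so I would state that convention up front before unrolling the recurrences.
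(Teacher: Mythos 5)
Your binary-tree construction is the same one the paper uses, but there is a genuine gap in how you account for size and depth: you obtain $2n-1$ by declaring a two-input XOR to be a primitive gate of unit cost and by counting the $n$ input nodes toward the size. That is not the circuit model of the paper. Throughout this section (and in its definitions of $NC^k$ and $AC^k$) the paper works with circuits over AND, OR and NOT gates; XOR is not a basis gate, and input nodes are not counted as gates. Indeed, under your convention the natural statement of the theorem would be ``size $n-1$'' (the number of XOR nodes) --- the figure $2n-1$ in the theorem comes precisely from the cost of implementing the XOR nodes over the De Morgan basis, which your proof never addresses. You noticed this yourself when you wrote that the bound ``emerges only under the convention'' you chose; that convention is reverse-engineered to produce the stated number rather than derived from the model the theorem is about.

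The implementation step is the actual content of the paper's proof: each internal XOR node is written as a $2$-CNF or a $2$-DNF, i.e.\ $(x\lor y)\land(\bar x\lor\bar y)$ or $(x\land\bar y)\lor(\bar x\land y)$, with the two forms alternating between consecutive levels of the tree so that adjacent layers of like gates can be merged (the same merging trick the paper uses for its depth-$3$ upper bound on parity). After merging, each of the $n-1$ XOR nodes costs only $2$ AND/OR gates, except the top one, which costs $3$, giving $2(n-1)+1 = 2n-1$ gates in total; the alternation is also what prevents each XOR level from contributing two layers of depth. To repair your argument you would keep your tree and your recurrences, but restate them in terms of AND/OR gates after this implementation-and-merging step, rather than in terms of unit-cost XOR gates plus input nodes.
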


\begin{proof}
Build a complete binary tree with $n$ leaves, corresponding to variables $x_1, \cdots, x_n$; all non-leaf nodes are $\text{XOR} = \text{PAR}_2$ gates. Since the depth of the binary tree is $d = \lceil \log n \rceil$, the number of non-leaf nodes is:
$$(n - 2^{d - 1}) + 2^{d - 2} + 2^{d - 3} + \cdots + 1 = n - 1$$
For each $\text{XOR}$ gate, we need $2$ gates to implement (recall that XOR can be either written as $2$-CNF or $2$-DNF, and apply the gates merging technique again) except for the top gate ($3$ gates are needed). The total number of gates is $2(n - 1) + 1 = 2n - 1$ 

\end{proof}

\subsection{Proof of the Switching Lemma}

Let the DNF $F = T_1 \lor T_2 \lor \cdots \lor T_m$. We restrict variables in two stages.

\textbf{Stage 1:} Restrict variable with probability $\sqrt{p}$, $f \to f|_{\rho_1}$:
$$x_i = \begin{cases} 0 \text{ with probability } \frac{1 - \sqrt{p}}{2} \\ x_i \text{ with probability } \sqrt{p} \\ 1 \text{  with probability } \frac{1 - \sqrt{p}}{2} \end{cases}$$

In the first case, we consider terms with fan-in $\ge 4 \log S$ we have that:
$$P[\text{Any Term with fan-in} \ge 4 \log S \neq 0] \le (\frac{1 + \sqrt{p}}{2})^{4 \log S} \le (\frac{2}{3})^{4 \log S} \le \frac{1}{S^3}$$
Therefore, we have that:
$$P[\exists \text{Term with fan-in} \ge 4 \log S \text{doesn't become } 0] \le \frac{1}{S^2}$$

In the second case, we consider terms with fan-in $\le 4 \log S$:
$$P[T_i \text{depends on } c_0 \text{variables}] \le (4 \log S)^{c_0}(\sqrt{p})^{c_0} \le \frac{1}{S^3}$$
Therefore it must be the case that:
$$P[\exists \text{term with fan-in} \le 4 \log S, \text{depends on } \ge c_0 \text{variables}] \le \frac{1}{S^2}$$
where $c_0$ is a constant depending on $c_1$, so now the DNF is also a $c_0$-DNF.

\textbf{Stage 2:} Restrict variables in $f|_{\rho_1}$ with probability $\sqrt{p}$ $\to$ $f|_{\rho_1 \cup \rho_2}$:
$$x_i = \begin{cases} 0 \text{ with probability } \frac{1 - \sqrt{p}}{2} \\ x_i \text{ with probability } \sqrt{p} \\ 1 \text{  with probability } \frac{1 - \sqrt{p}}{2} \end{cases}$$

In the first case, there are many disjoint terms $T_1, \cdots, T_l$ with $l \ge 3^{c_0} 4 \log S$ so that:
$$P[T_i = 1] \ge (\frac{1}{3})^{c_0}$$
$$P[T_i \neq 1] \le (1 - \frac{1}{3})^{c_0}$$
$$P[\exists T_i = 1] \ge 1 - ((1 - \frac{1}{3})^{c_0})^l = 1 - 2^{-l / 3^{c_0}} \ge 1 - \frac{1}{S^2}$$

In the second case, the maximum number of disjoint $T_i$'s is $\le 4 \log S$.

Now if we select disjoint $T_i$'s greedily, we find that there exists a set $H$ with $4c_0 3^{c_0} \log S$ variables such that $\forall i$ we have that $H \cap T_i \neq \emptyset$

To finish off the proof, we first restrict variables of $H$, and we can find a constant $b$ such that the number of unset variables in $H \le b$. We subsequently restrict variables not in $H$, and use induction on $c_0$. 

Now Razborov provided a simpler version of this proof which is presented in $\cite{AB07}$:

\begin{proof} We will present a more simplified proof of the Switching Lemma due to Razborov. Let $R_t$ denote the set of all restrictions to $t$ variables, where $t \ge n / 2$. Then:
$$|R_t| = {n \choose t} 2^t$$
The set of \textit{bad restrictions} $\rho$ - those for which $D(f|_{\rho}) > s$ is greater than $s$ - is a subset of these. To show that this subset is small, we give a one-to-one mapping from it to the Cartesian product of three sets: $R_{t + s}$, the set of restrictions to $(t + s)$ variables, a set $\text{code}(k, s)$ of size $k^{O(s)}$ and the set $\{0, 1\}^s$ (The set $\text{code}(k, s)$ is explained below). This Cartesian product has size ${n \choose t + s}2^{t + s} k^{O(s)} 2^s$. Thus the probability of picking a bad restriction is bounded by:
$$\frac{{n \choose t + s}2^{t + s}k^{O(s)}2^s}{{n \choose t}2^t}$$
Intuitively, this ratio is small because $k, s$ are to be thought of as constant and $t > n / 2$, and therefore:
$${n \choose t}2^t >> {n \choose t + s}2^{t + s}$$
Formally, by using the correct constants as well as the approximation ${n \choose a} \sim (ne / a)^a$ we can upper-bound the ratio as follows:
$$(\frac{7(n - t)k}{n})^2$$
Therefore to prove the Switching Lemma it suffices to describe the one-to-one mapping mentioned above. This uses the notion of a \textit{canonical decision tree} for $f$. We take the $k$-DNF circuit for $f$ and order its terms (i.e. the $\land$ gates in Figure $1$) arbitrarily and within each term we order the variables. The canonical decision tree queries all the variables in the first term in order, then all the variables in the second term, and so on until the function value is determined.

Suppose that restriction $\rho$ is bad, that is, $D(f|_{\rho}) > s$. The canonical decision tree for $f|_{\rho}$ is defined in the same way as for $f$, using the same order for terms and variables. Since the decision tree complexity of $f|_{\rho}$ is at least $s$, there is a path of length at least $s$ from the root to a least. This path defines a partial assignment to the input variables; denote it by $\pi$. The rough intuition is that the one-to-one mapping takes $\rho$ to itself plus $\pi$. 

let us reason about restriction $\rho$. None of the $\land$ gates outputs $1$ under $\rho$, otherwise $f|_{\rho}$ would be determined and would not require a decision tree. Some terms output $0$, but not all, since that would also fix the overall output. Imagine walking down the path $\pi$. Let $t_1$ be the first term that is not set to zero under $\rho$. Then $\pi$ must query all the unfixed variables in $t_1$. Denote the part of path $\pi$ that deals with $t_1$ by $\pi_1$; that is, $\pi_1$ is an assignment to the variables of $t_1$. Since $f$ is not determined even after $s$ steps in $\pi$, we conclude that $\pi_1$ sets $t_1$ to zero. 

Let $t_2$ be the next term not yet set to zero by $\rho$ and $\pi_1$; again, the path must set $t_2$ to zero. Let $\pi_2$ denote the assignment to the variables of $t_2$ along the path. Then $\pi_2$ also sets $t_2$ to zero. This process continues until we have dealt with $m$ terms when $\pi$ has reached depth $s$. Each of these terms was not set by $\rho$ and, except perhaps for $\pi_m$, is set to zero after $s$ queries in $\pi$. The disjoint union of these $\pi_i$ terms contains assignments for at least $s$ variables that were unfixed in $\rho$. 

Our mapping will map $\rho$ to:
$$([\rho \cup \sigma_1 \cup \cdots \cup \sigma_m], c, z)$$
where the term $\sigma_i$ is the unique set of assignments that makes $t_i$ true, $c = c_1c_2, \cdots, c_m$ is in $\text{code}(k, s)$ and $z \in \{0, 1\}^s$. In defining this mapping we are crucially relying on the fact that there is only one way to make a term true, namely to set all its literals to $1$.

To show that the mapping is one-to-one, we show how to invert it uniquely. This is harder than it looks since \textit{a priori} there is no way to identify $\rho$ from $\rho \cup \sigma_1 \cup \cdots \cup \sigma_m$. The main idea is that the information in $c$ and $z$ allows us to extract $\rho$ from the union. 

Suppose that we are given the assignment $\rho \cup \sigma_1 \cdots \cup \sigma_m$. We can plug this assignment into $f$ and then infer which term serves as $t_1$. It is the first one to be set true. The first $k$ bits of $c$, say $c_1$, are an indicator string showing which variables in $t_1$ are set by $\sigma_1$. We can reconstruct $\pi_1$ from $\sigma_1$ using the string $z$, which indicates which of the $s$ bits fixed in the decision tree differ between the $\pi$ assignments and the $\sigma$ assignments. 

Having reconstructed $\pi_1$, we can work out which term is $t_2$: it is the first \textit{true} term under the restriction $\rho \cup \pi_1 \cup \sigma_2 \cdots \cup \sigma_m$. The next $k$ bits of $c$, denoted $c_2$, give us $\sigma_2$ and continue this process until we have processed all $m$ terms and figured out what $\sigma_1, \cdots, \sigma_m$ are. Thus we have figured out $\rho$, so the mapping is one-to0one.

Finally, we define the set $\text{code}(k, s)$: this is the set of all sequences of $k$-bit binary strings in which each string has at least one $1$ bit and the total number of $1$ bits is at most $s$. It can be shown by induction on $s$ that:
$$|\text{code}(k, s)| \le (\frac{k}{\ln 2})^s$$
\end{proof}

\subsection{Proof that PARITY $\notin$ $AC^0$}

We will prove this theorem in two steps, the first of which we will show that for any given $AC_0$ circuit, there is a low degree polynomial that approximates the circuit, and the second of which we will show that parity cannot be approximated by a low degree polynomial. 

\textbf{Lemma:} Every function $f: \mathbb{F}_p^n \to \mathbb{F}$ is computed by a unique polynomial of degree at most $p - 1$ in each variable.

\begin{proof}
Given any $a \in \mathbb{F}_p^n$, consider the polynomial:
$$1_a = \prod_{i = 1}^n \prod_{z \in \mathbb{F}_p, z_i \neq a_i} \frac{(X_i - z_i)}{(a_i - z_i)}$$
We have that:
$$1_a(b) = \begin{cases} 1 \text{ if } a = b \\ 0 \text{  else} \end{cases}$$
Furthermore, each variable has degree at most $p - 1$ in each variable. Now given any function $f$, we can represent $f$ using the polynomial:
$$f(X_1, \cdots, X_n) = \sum_{a \in \mathbb{F}_p^n} f(a) \cdot 1_a$$
To prove that this polynomial is unique, note that the space of polynomials whose degree is at most $p - 1$ ine ach variable is spanned by monomials where the degree in each of the variables is at most $p - 1$, so it is a space of dimension $p^n$ (i.e. there are $p^{p^n}$ monomials). 

Similarly, the space of functions $f$ is also of dimension $p^n$. Therefore the correspondence must be one to one. \\
\end{proof}

Suppose that we are given a circuit $C \in \text{AC}_0$, we build an approximating polynomial gate by gate. The input gates are relatively straightforward, $x_i$ is a good approximation to the $i$-th input. Similarly, the negation of $f_i$ is the same as the polynomial $1 - f_i$. 

The difficult case is a function like $f_1 \land f_2 \land \cdots \land f_t$ which can be computed by a single gate in the circuit. The naive approach would be to use the polynomial $\prod_{i = 1}^t f_i$. However, this gives a polynomial whose degree may be as large as the fan-in of the gate, which is too large for our purposes. 

We will use the following trick, let $S \subset [t]$ be a completely random set, and consider the function $\sum_{i \in S} f_i$. Then we have the following claim.

\textbf{Claim:} If there is some $j$ such that $f_j \neq 0$, then $P_S[\sum_{i \in S} f_i = 0] \le 1 /2$

\begin{proof}
Observe that for every set $T \subseteq [n] - \{j \}$, it cannot be that both:
$$\sum_{i \in T} f_i = 0$$
and:
$$f_j + \sum_{i \in T} f_i = 0$$
Therefore at most half of the sets can give a non-zero sum. 
\end{proof}

Now note that squaring turns non-zero values into $1$, so let us select independent uniformly random sets $S_1, \cdots, S_t \subseteq [t]$ and use the approximation:
$$g = 1 - \prod_{k = 1}^l(1 - (\sum_{i \in S_k} f_i)^2)$$

\textbf{Claim:} If each $f_i$ has degree at most $r$, then $g$ has degree at most $2lr$ and:
$$P[g \neq f_1 \lor f_2 \lor \cdots \lor f_t] \le 2^{-l}$$

Overall, if the circuit is of depth $h$, and has $s$ gates, then this process produces a polynomial whose degree is at most $(2l)^h$ that agrees with the circuit on any fixed input except with probability $s2^{-l}$ by the union bound. Therefore in expectation, the polynomial we produce will compute the correct value on a $1 - s2^{-l}$ fraction of all inputs. 

Setting $l = \log^2 n$ we obtain a polynomial of degree $\text{polylog}(n)$ that agrees with the circuit on all but one percent of the inputs.

Now it remains to prove the following theorem.

\textbf{Theorem:} Let $f$ be any polynomial over $\mathbb{F}_3$ in $n$ variables whose degree is $d$. Then $f$ can compute the parity on at most $1/2 + O(d / \sqrt{n})$ fraction of all inputs.

\begin{proof}
Consider the polynomial:
$$g(Y_1, \cdots, Y_n) = f(Y_1 - 1, Y_2 - 1, \cdots, Y_n - 1) + 1$$
The key point is that when $Y_1, \cdots, Y_n \in \{1, -1\}$, if $f$ computes the parity of $n$ bits, then $g$ computes the product $\prod_i Y_i$. Therefore, we have found a degree $d$ polynomial that can compute the same quantity as the product of $n$ variables. We shall show that this computation cannot work on a large fraction of inputs using a counting argument.

Let $T \subseteq \{1, -1\}^n$ denote the set of inputs for which $g(y) = \prod_i y_i$. To complete the proof, it will suffice to show that $T$ consists of at most $1/2 + O(d / \sqrt{n})$ fraction of all strings.

Consider the set of all functions $q : T \to \mathbb{F}_3$. This is a space dimension $|T|$. We shall show how to compute every such function using a low degree polynomial. 

By Fact $2$, every such function $q$ can be computed by a polynomial. Note that in any such polynomial, since $y_i \in \{-1, 1\}$, we have that $y_i^2 = 1$ so we can assume that each variable has degree at most $1$. Now suppose $I \subset [n]$ is a set of size more than $n / 2$, then for $y \in T$:
$$\prod_{i \in I} y_i = (\prod_{i = 1}^n y_i)(\prod_{i \notin I} y_i) = g(y)(\prod_{i \notin I} y_i)$$
In this way, we can express every monomial of $q$ with low degree terms, and so obtain a polynomial of degree at most $n / 2 + d$ that computes $q$.

The space of all such polynomials is spanned by $\sum_{i = 0}^{n/2 + d} {n \choose i}$ monomials. Therefore we get that:
$$|T| \le \sum_{i = 0}^{n / 2 + d} {n \choose i}$$
$$\le 2^n / 2 + \sum_{i = n/2 + 1}^d {n \choose i}$$
$$\le 2^n / 2 + O(d \cdot 2^n / \sqrt{n}) = 2^n(1 / 2 + O(d / \sqrt{n}))$$

\end{proof}


\section{Beyond Canonical Hastad \cite{Bea94}} 

Whilst uniform restrictions consider the variables set equally likely to values $0$ or $1$, we can also consider restrictions whether the input variables are set to these values with imbalanced probability. This method is used to get circuit lower bounds for the CLIQUE problem. 

The key idea is to provide weights to restrictions which reflect the probability of the restriction being chosen as a random member of the set. Indeed suppose that in the context of Hastad's switching lemma, we wanted to argue that there is a restriction which is strongly biased towards setting bits to $1$ and keeps the decision tree height small. The basic switching Lemma does not give us this information because as $n$ gets large it is much more unlikely to get a restriction with even a constant factor bias than the probability of failure. 

\subsection{Other Restriction Methods}
The $\text{Stars}_m$ restriction is defined as a map from $\{x_1, \cdots, x_n\} \to \{0, 1, *\}$ with exactly $m$ stars and behaves similarly to $R_{m / n}$. The corresponding switching lemma is:
$$P[\text{DT}_{\text{depth}}(k-\text{DNF}_{\text{Stars}_m} \ge t] \le O((m / n)k)^t$$

Consider the $q$-biased $p$-restriction $R_{p, q}$:
$$R_{p, q}(x_i) = \begin{cases} * \text{ w.p. } p \\ 1 \text{ w.p. } (1 - p)q \\ 0 \text{ w.p. } (1 - p)(1 - q) \end{cases}$$
Now the corresponding switching lemma (for $q \le 0.5$) is as follows:
$$P[\text{DT}_{\text{depth}}(k-DNF|_{R_{p, q}} \ge t] \le O(pk / q)^t$$
This is typically used for average case lower bounds under $q$-biased distributions on $\{0, 1\}^n$.

Another switching lemma forms the foundation of the so-called "clique switching lemma" for the random restriction on ${n \choose 2}$ variables such that the stars are edges of a clique on a $p$-random set of vertices and the non-stars are set to $1$ with probability $q$ and $0$ with probability $1 - q$.

The corresponding switching lemma for $q \le 0.5$ is as follows:
$$P{\text{DT}_{\text{depth}}}(k-\text{DNF}_{\text{Clique}_{p, q}}) \ge t] \le O(pk / q^{O(k + t)})^t$$
In particular, this provides a $n^{\Omega(k / d^2)}$ lower bound for $k$-$\text{CLIQUE}_n$. Readers are encouraged to look at \cite{Bea94} for a more in-depth analysis on these biased-restrictions, their applications and motivations. 


\subsection{Consequences of Parity not in $AC^0$}

$\text{PARITY} \notin \text{AC}_0$ is the way to obtain an oracle that separates PSPACE from the Polynomial Hierarchy. Hence, no proof that relativises can be used to separate PSPACE from the Polynomial hierarchy. There are many consequences of this result threaded throughout Complexity Theory literature, where three of the key results are highlighted below.

\textbf{Fourier Concentration:} The Fourier expansion of a boolean function is it's representation as a low-degree polynomial. Using Hastad's switching lemma, we can show that the Fourier Expansion of any function in $AC^0$ has all of it's larger coefficients concentrated on it's low order Fourier coefficients; $AC^0$ functions can be approximated by low degree polynomials. 

\textbf{Pseudo-Random Generators for $AC^0$:} De-randomization studies the possibility of removing or reducing the amount of randomization used by randomized algorithms while still maintaining their efficiency and correctness. Nisan and Wigderson have proved that the randomized analogues of $AC^0, RAC^0$ and $BPAC^0$ can be de-randomized in poly-logarithmic space and quasi-polynomial time.

\textbf{$AC^0$-Circuit SAT and $\#$-SAT Algorithms:} Given an $AC^0$ circuit ,determine whether there exists an input $x$ which evaluates that circuit to $1$. Impagliazzo, Matthew and Paturi have demonstrated that Hastad's Switching Lemma can be used to provide non-trivial algorithms for $AC^0$-circuit SAT. 

\subsection{Comparison with Polynomial Method}

The polynomial method for proving lower bounds on PARITY is similar to the method of random restrictions in the sense that we take advantage of a property which is common to all circuits of small size and constant depth with PARITY does not have. This property is that circuits of small size and constant depth can be represented by low degree polynomials with high probability. 

In particular, we use the following fact and two lemmas to prove the result.

\textbf{Fact:} Any function $g : \{0, 1\}^n \to \mathbb{R}$ has degree at most $d$ if and only if $\hat{g}_{\alpha} = 0$ for all $\alpha$ such that $|\alpha| > d$

\textbf{Lemma 1:} For every circuit $C$ of size $S$ and depth $d$, there is a function $g : \{0, 1\}^n \to \mathbb{R}$ of degree $O((\log S)^{2d})$ such that $g$ and $C$ agree on at least $3/4$ fraction of $\{0, 1\}^n$

\begin{proof}
Given a circuit $C$ of size $S$ and depth $d$, for every gate we pick independently an approximating function $g_i$ with parameter $\epsilon = \frac{1}{4S}$ and replace the gate by $g_i$. Then for a given input, the probability that the new function so defined computes $C(x)$ correctly is at least the probability that the results of all the gates are correctly computed, which is at least $\frac{3}{4}$. In particular, there is a function among those generated this way that agrees with $C()$ on at least $3/4$ fraction of inputs. Each $g_i$ has degree at most $O(\log S)^2$, because the fan-in of each gate is at most $S$, and the degree of the function defined in the construction is at most $O((\log S)^{2d}$.

\end{proof}

\textbf{Lemma 2:} Let $g : \{0, 1\}^n \to \mathbb{R}$ be a function that agrees with PARITY on at least $3/4$ fraction of $\{0, 1\}^n$. Then the degree of $g$ is $\Omega(\sqrt{n})$

\begin{proof}
Let $g : \{0, 1\}^n \to \mathbb{R}$ be a function of degree at most $t$ that agrees with PARITY on at least $3/4$ fraction of inputs. Let $G : \{-1, 1\}^n \to \mathbb{R}$ be defined as:
$$G(x) = 1 - 2g(\frac{1}{2} - \frac{1}{2} x_1, \cdots, \frac{1}{2} - \frac{1}{2} x_n)$$
Now note that $G$ is still of degree at most $t$, and $G$ agrees with the function $\prod(x_1, \cdots, x_n) = x_1 \cdot x_2 \cdots x_n$ on at least $3/4$ fraction of $\{-1, 1\}^n$.

Define $A$ to be the set of $x \in \{-1, 1\}^n$ such that $G(x) = \prod(x)$:
$$A = \{x : G(x) = \prod_{i = 1}^n x_i \}$$
Then $|A| \ge \frac{3}{4} \cdot 2^n$ by our initial assumption. Now consider the set $F$ of all functions $f : A \to \mathbb{R}$. These form a vector space of dimension $|A|$ over the reals. We know that any function $f$ in this set can be written as:
$$f(x) = \sum_{\alpha} \hat{f}_{\alpha} \prod_{i \in \alpha} x_i$$
Over $A$, $G(x) = \prod_{i = 1}^n x_i$ and therefore for $x \in A$:
$$\prod_{i \in \alpha} x_i = G(x) \prod_{i \notin \alpha} x_i$$
Now by our initial assumption, $G(x)$ is a polynomial of degree at most $t$. Therefore, for every $\alpha$ such that $|\alpha| \ge \frac{n}{2}$, we can replace $\prod_{i \in \alpha} x_i$ by a polynomial of degree less than or equal to $t + \frac{n}{2}$. Every such function $f$ which belong to $F$ can be written as a polynomial of degree at most $t + \frac{n}{2}$. Hence the set $\{\prod_{i \in \alpha} x_i\}_{|\alpha| \le t+ \frac{n}{2}}$ forms a basis for the set $S$. As there must be at least $|A|$ such monomials, this implies that:
$$\sum_{k = 0}^{t + \frac{n}{2}} {n \choose k} \ge \frac{3}{4} \cdot 2^n$$
And in particular:
$$\sum_{k = \frac{n}{2}}^{t + \frac{n}{2}} {n \choose k} \ge \frac{1}{4} \cdot 2^n$$
Now we know from Stirling's approximation that every binomial coefficient ${n \choose k}$ is at most $O(2^n / \sqrt{n})$, and therefore we obtain:
$$O(\frac{t}{\sqrt{n}} \cdot 2^n) \ge \frac{1}{4} \cdot 2^n$$
This implies that $t = \Omega(\sqrt{n})$.
\end{proof}

And the theorem is subsequently proved as follows.

\begin{proof}
From Lemma $1$, we have that there is a function $g : \{0, 1\}^n \to \mathbb{R}$ that agrees with PARITY on a $3/4$ fraction of $\{0, 1\}^n$, whose degree is at most $O((\log S)^{2d})$. From Lemma $2$, we can deduce that the degree of $g$ must be at least $\Omega(\sqrt{n})$ so that:
$$(\log S)^{2d} = \Omega(\sqrt{n})$$
which is equivalent to:
$$S = 2^{\Omega(n^{1/4d})}$$
\end{proof}

In comparison with the polynomial method, the method of random restrictions prove a stronger and tighter lower bound. Furthermore, they use a property of the parity function which is true of other functions. Therefore, it truly is a general method to attack circuit lower bounds. 

Polynomial methods can also accomplish this, and further has the advantage that it can be applied to any circuit model in which gates can be approximated by low-degree polynomials. For example, consider the $\text{AC}0$ model in which we have NOT gates, unbounded fan-in AN and OR gates and also MOD3 gates that, given boolean inputs $x_1, \cdots, x_n$ output $1$ if and only if $\sum_i x_i \equiv 1 \lor 2 \mod 3$. 

The method of random restrictions cannot be applied to such circuits, because a MOD3 gate has a value that remains under-determined as long as at least three variables are not fixed and requires a CNF of size exponential in the number of non-fixed variables. We can prove that every $AC0^3$ circuit of depth $d$ that computes PARITY must size at least $2^{\Omega(n^1/(4d)}$.

\section*{References}
\beginrefs

\bibentry{AB07} {\sc S. Arora, B. Barak}, Computational Complexity: A Modern Approach, \textit{Princeton University} (2007) \url{https://theory.cs.princeton.edu/complexity/book.pdf}

\bibentry{ASWZ20} {\sc R. Alweiss, S. Lovett, K. Wu, J. Zhang}, Improved Bounds for the Sunflower Lemma, \textit{Princeton University, UCSD, Peking University, Harvard University} (2019) \url{https://dl.acm.org/doi/pdf/10.1145/3357713.3384234}

\bibentry{Bea94} {\sc P. Beame}, A Switching Lemma Primer, \textit{The University of Toronto} (1994) \url{https://www.cs.toronto.edu/~toni/Courses/Complexity2015/handouts/primer.pdf}

\bibentry{Bor72} {\sc A. Borodin}, Computational Complexity and the Existence of Complexity Gaps, \textit{University of Toronto} (1972) \url{https://dl.acm.org/doi/pdf/10.1145/321679.321691}

\bibentry{BH09} {\sc P. Beame, D. Huynh-Ngoc}, Multiparty Communication Complexity and Threshold Circuit Size of $AC^0$, \textit{The University of Washington Department of Computer Science} (2009) \url{https://homes.cs.washington.edu/~beame/papers/multiac0j.pdf}

\bibentry{Coh12} {\sc G. Cohen}, A Taste of Circuit Complexity Pivoted at $\text{NEXP} \subsetneq \text{ACC}^0$ (and more), \textit{Weizmann Institute of Science} (2012)

\bibentry{CRTY19} {\sc L. Chen, R. Rothblum, R. Tell, E. Yogev}, On Exponential-Time Hypotheses, Derandomization, and Circuit Lower Bounds, \textit{Electronic Colloquium on Computational Complexity} (2019) \url{https://eccc.weizmann.ac.il/report/2019/169/}

\bibentry{Fur08} {\sc J. Furtado}, An Introduction to Computational Complexity, \textit{MIT 6.080: Great Ideas in Theoretical Computer Science}, \textit{Massachusetts's Institute of Technology}

\bibentry{GNW99} {\sc O. Goldreich, N. Nisan, A. Wigderson}, On Yao's XOR-Lemma, \textit{Electronic Colloquium on Computational Complexity} (1999), pp. 10-29 \url{http://www.wisdom.weizmann.ac.il/~oded/COL/yao.pdf}

\bibentry{Gol05} {\sc P. Goldreich}, Texts in Computational Complexity: $\text{P}/\text{Poly}$ and $\text{PH}$, \textit{Department of Computer Science and Applied Mathematics, Weizmann Institute Israel} (2005)

\bibentry{GP14} {\sc J. Grochow, T. Pitassi}, Circuit Complexity, Proof Complexity and Polynomial Identity Testing, (2014)

\bibentry{GW04} {\sc O. Goldreich, A. Wigderson}, Computational Complexity, \textit{Weizmann Institute of Science}, \textit{Institute of Advanced Study} (2004)

\bibentry{Jun12} {\sc S. Junka}, Boolean Function Complexity: Advances and Frontiers, \textit{Algorithms and Combinatorics}, Vol. $27$ (2012) ISBN: $978$-$3$-$642$-$24507$-$7$

\bibentry{L20} {\sc S. Lovett}, CSE200: Complexity Theory, Circuit Lower Bounds, \textit{UC San Diego Department of Computer Science}

\bibentry{Mos16} {\sc D. Moshkovitz}, Derandomization Implies Circuit Lower Bounds, \textit{MIT: Advanced Complexity Theory} (2016) \url{https://ocw.mit.edu/courses/mathematics/18-405j-advanced-complexity-theory-spring-2016/lecture-notes/MIT18_405JS16_CircuitLower.pdf}

\bibentry{Mor19} {\sc H. Morizumi}, Some Results on the Circuit Complexity of Bounded Width Circuits and Non-deterministic Circuits, \textit{Shimane University} (2019) \url{https://arxiv.org/pdf/1811.01347.pdf}

\bibentry{RST15} Benjamin Rossman, Rocco A. Servedio, and Li-Yang Tan. An average-case depth hierarchy theorem for Boolean circuits. In \textit{Proceedings of the 56th Annual Symposium on Foundations of Computer Science}, $2015$ \url{https://arxiv.org/pdf/1504.03398.pdf}

\bibentry{TTV09}{\sc L. Trevisan, M. Tulsiani, S. Vadhan}, Regularity, Boosting and Efficiently Stimulating Every High-Entropy Distribution {\it SEAS Harvard\/} (2009), pp.~1-11 \url{https://people.seas.harvard.edu/~salil/research/regularity-ccc09.pdf}

\bibentry{Wig19} {\sc A. Wigderson}, Mathematics and Computation: A Theory Revolutionizing Technology and Science, \textit{Princeton University Press} (2019) \url{https://www.math.ias.edu/files/Book-online-Aug0619.pdf}

\bibentry{Weg87} {\sc I. Wegener}, The Complexity of Boolean Functions, \textit{Johan Wolfgang Goethe-Universitat} (1987)

\bibentry{Wil10} {\sc R. Williams}, Non-uniform $ACC$ Circuit Lower Bounds, \textit{IBM Almaden Research Center} (2010) \url{https://www.cs.cmu.edu/~ryanw/acc-lbs.pdf}

\bibentry{Wil14} {\sc R. Williams}, Algorithms for Circuits and Circuits for Algorithms: Connecting the Tractable and Intractable, \textit{IEEE Conference on Computational Complexity} (2014) \url{https://people.csail.mit.edu/rrw/ICM-survey.pdf}

\endrefs

\end{document}